\documentclass[aps,prl,reprint,amsmath,amssymb,longbibliography]{revtex4-2}

\usepackage{lipsum}
\usepackage{amsmath}
\usepackage{amssymb}
\usepackage{amsthm}
\usepackage{mathrsfs}
\usepackage{accents}
\usepackage{calc}
\usepackage{upgreek}
\usepackage{slashed}
\usepackage{xifthen}
\usepackage{graphicx}
\usepackage{longtable}
\usepackage[inline]{enumitem}

\usepackage{tikz}

\usepackage{xcolor}
\definecolor{winered}{rgb}{0.6,0,0}
\definecolor{lessblue}{rgb}{0,0,0.7}

\usepackage[pdftex,colorlinks=true,linkcolor=winered,citecolor=lessblue,urlcolor=lessblue,breaklinks=true,bookmarksopen=true]{hyperref}

\makeatletter
\newcommand{\myitem}[2]{\item[\rm(#2)]\def\@currentlabel{#2}\label{#1}}
\makeatother

\makeatletter
\def\@nomenstarted{0}
\newlength{\@nomenoldtabcolsep}

\newcommand{\nomenstart}
  {%
    \def\@nomenstarted{1}%
    \setlength{\@nomenoldtabcolsep}{\tabcolsep}%
    \setlength{\tabcolsep}{3.5pt}%
    \begin{longtable}{p{0.11\textwidth} p{0.86\textwidth}}
  }

\newcommand{\nomenitem}[2]{%
    \ifcase\@nomenstarted%
      \or 
      \or \\ 
    \fi%
    #1\,{\leavevmode\leaders\hbox{\,.}\hfill} & #2%
    \def\@nomenstarted{2}%
  }%
\newcommand{\nomenend}
  {\\%
      \end{longtable}%
      \setlength{\tabcolsep}{\@nomenoldtabcolsep}%
      \def\@nomenstarted{0}%
  }
\makeatother

\makeatletter
\newcommand{\BIG}{\bBigg@{3.5}}
\newcommand{\vast}{\bBigg@{4}}
\newcommand{\Vast}{\bBigg@{5}}
\newcommand{\VAST}[1]{\bBigg@{#1}}
\makeatother

\allowdisplaybreaks

\newtheoremstyle{prl}{\topsep}{\topsep}{\normalfont}{\parindent}
  {\itshape}{---}{0.3em}{}
\theoremstyle{prl}

\newtheorem{thm}{Theorem}

\newtheorem{lemma}[thm]{Lemma}

\newtheorem*{thm*}{Theorem}
\newtheorem*{prop*}{Proposition}
\newtheorem*{cor*}{Corollary}
\newtheorem*{conj*}{Conjecture}

\theoremstyle{definition}

\theoremstyle{remark}

\makeatletter
\newcommand{\fakephantomsection}{%
  \Hy@MakeCurrentHref{\@currenvir.\the\Hy@linkcounter}
  \Hy@raisedlink{\hyper@anchorstart{\@currentHref}\hyper@anchorend}%
  \Hy@GlobalStepCount\Hy@linkcounter%
}
\makeatother

\newcommand{\mc}{\mathcal}
\newcommand{\cA}{\mc A}

\newcommand{\cC}{\mc C}
\newcommand{\cD}{\mc D}

\newcommand{\cH}{\mc H}
\newcommand{\cI}{\mc I}
\newcommand{\cJ}{\mc J}

\newcommand{\cL}{\mc L}

\newcommand{\cO}{\mc O}

\newcommand{\cS}{\mc S}

\newcommand{\cV}{\mc V}

\newcommand{\ms}{\mathscr}

\newcommand{\sD}{\ms D}

\newcommand{\sS}{\ms S}

\newcommand{\sW}{\ms W}

\newcommand{\C}{\mathbb{C}}
\newcommand{\N}{\mathbb{N}}
\newcommand{\R}{\mathbb{R}}
\newcommand{\Z}{\mathbb{Z}}

\newcommand{\Sph}{\mathbb{S}}

\newcommand{\fs}{\mathfrak{s}}

\newcommand{\bOne}{{\boldsymbol{1}}}
\renewcommand{\bell}{{\boldsymbol{\ell}}}
\renewcommand{\bm}{{\boldsymbol{m}}}
\newcommand{\bn}{{\boldsymbol{n}}}

\renewcommand{\Re}{\operatorname{Re}}
\renewcommand{\Im}{\operatorname{Im}}

\newcommand{\mathspan}{\operatorname{span}}
\newcommand{\supp}{\operatorname{supp}}

\newcommand{\tr}{\operatorname{tr}}

\DeclareMathOperator{\adj}{adj}
\newcommand{\vecz}{{\!\vec{\;z}}}

\newcommand{\hra}{\hookrightarrow}
\newcommand{\la}{\langle}

\renewcommand{\ol}{\overline}
\newcommand{\pa}{\partial}
\newcommand{\dd}{{\mathrm d}}
\newcommand{\ra}{\rangle}

\newcommand{\wt}{\widetilde}

\newcommand{\bop}{{\mathrm{b}}}

\renewcommand{\cp}{{\mathrm{c}}}

\newcommand{\CI}{\cC^\infty}

\newcommand{\CIc}{\cC^\infty_\cp}

\newcommand{\openbigpmatrix}[1]
  {%
    \def\@bigpmatrixsize{#1}%
    \addtolength{\arraycolsep}{-#1}%
    \begin{pmatrix}%
  }
\newcommand{\closebigpmatrix}
  {%
    \end{pmatrix}%
    \addtolength{\arraycolsep}{\@bigpmatrixsize}%
  }

\newlength{\enummargin}

\DeclareGraphicsExtensions{.mps}

\makeatletter
\newcommand*{\fwbw}[1]{\expandafter\@fwbw\csname c@#1\endcsname}
\newcommand*{\@fwbw}[1]{\ifcase #1 \or {\rm fw}\or {\rm bw}\fi}
\AddEnumerateCounter{\fwbw}{\@fwbw}
\makeatother

\begin{document}
\raggedbottom

\title{Perturbations of Kerr--Newman black holes: separation and mode stability}

\date{\today}


\author{Peter Hintz}
\email[Contact author: ]{phintz@psu.edu}
\affiliation{Department of Mathematics, Pennsylvania State University, 54 McAllister St, State College, PA 16801, United States}

\begin{abstract}
  We study linearized perturbations of the subextremal Kerr--Newman black hole solution of the Einstein--Maxwell equations. We show how to overcome what Chandrasekhar called the ``apparent indissolubility of the coupling between the spin-1 and spin-2 fields'' by using a matrix integrating factor for suitably transformed first-order systems that describe the radiative degrees of freedom. For the radial ODEs obtained after angular separation, we prove mode stability on the real axis, i.e., the absence of quasinormal modes with real frequencies, using a Whiting-type transform.
\end{abstract}

\maketitle

\newcommand{\prlheading}[1]{\par\textit{#1}---\ignorespaces}
\prlheading{Introduction}%
The Kerr--Newman (KN) solution \cite{NewmanCouchChinnaparedExtonPrakashTorrenceKN} of the Einstein--Maxwell equations describes a stationary asymptotically flat black hole (BH) in electrovacuum; it is the unique such solution \cite{BuntingKNUniqueness,MazurKerrNewmanUniqueness,ChruscielCostaHeuslerStationaryBH,LaiLiYuKNRigidity}. It depends on the parameters $M>0$ (mass), $a M$ (angular momentum), $Q$ (electric charge). The study of its perturbations is a classical topic \cite{ChitreKNPert,ChandrasekharBlackHoles}: in the Newman--Pen\-rose (NP) formalism \cite{NewmanPenroseSpin}, radiative degrees of freedom of linearized perturbations ${\sim}\,e^{-i\omega t+i m\phi}$ that are purely outgoing at future null infinity and purely ingoing at the future event horizon (hereafter called ``outgoing'') are described by $2\times 4$ scalars solving a \emph{positive spin system} (coupled first-order PDEs~\eqref{EqEPlus1}--\eqref{EqEPlus4} in $(r,\theta)$) and a \emph{negative} spin system.

In the Kerr \cite{KerrKerr} limit, the perturbation equations separate into the spin $\fs\in\{\pm 2,\pm 1\}$ Teukolsky equations \cite{TeukolskyKerr,TeukolskySeparation}, which can be fully decoupled into angular and radial ODEs. Mode stability of subextremal Kerr for the Einstein\mbox{(--Maxwell)} equations, i.e., the absence of outgoing mode solutions for $\Im\omega>0$, was proved by Whiting \cite{WhitingKerrModeStability}. Recent developments include \cite{ShlapentokhRothmanModeStability,AnderssonMaPaganiniWhitingModeStab,TeixeiradCModes,AminovGrassiHatsudaQNM,CasalsTeixeiradCModes,AnderssonHaefnerWhitingMode,PetersenVasyModeStab,ShlapentokhRothmanTeixeiradCTeukolskyI,ShlapentokhRothmanTeixeiradCTeukolskyII} and the proof of nonlinear stability of subextremal Kerr \cite{HintzKerrStab} (extending earlier work \cite{KlainermanSzeftelPolarized,AnderssonBackdahlBlueMaKerr,HaefnerHintzVasyKerr,KlainermanSzeftelKerr,GiorgiKlainermanSzeftelStability,ShenGCMKerr,DafermosHolzegelRodnianskiTaylorSchwarzschild,HintzPetersenVasyKdS,HaefnerHintzVasyKerrLarge}), with mode stability as a key input.

Either spin system is essentially equivalent to a coupled system of second-order PDEs, whose separation on KN with $a,Q\neq 0$ had remained elusive \cite{ChandrasekharBlackHoles,GiorgiKNEquations,KalninsWilliamsKN}. We resolve this issue by working with the first-order system: after transformation into a symmetric ``chiral'' form~\eqref{EqPf1Chiral}, we identify a matrix integrating factor~\eqref{EqPf2X} that produces a system~\eqref{EqPf2Final} amenable to radial and angular separation. Separated solutions of~\eqref{EqPf2Final} are described by an angular eigenvalue problem~\eqref{EqPf3AngEig} and a radial ODE~\eqref{EqPf3ODE}. A variant of the Whiting transform (inspired by \cite{PetersenVasyModeStab}) transforms the radial problem to an ODE amenable to standard Wronskian arguments for mode stability on the real axis (i.e., for real $\omega$). (In \cite[\S{3.9}]{HintzKdSMS}, \cite{HintzKdSFullModes}, mode stability on the real axis is shown to imply full ($\Im\omega\geq 0$) mode stability for scalar waves on Kerr: the continuous dependence of putative unstable modes on the BH parameters, a priori bounds on $|\Re\omega|,\Im\omega$ from WKB or energy methods, and the well-known full mode stability of Schwarzschild imply that unstable modes for Kerr would have to arise from real axis crossings---which mode stability on the real axis prevents.)

Detailed numerical studies \cite{DiasGodazgarSantosKerrNewman,DiasGodazgarSantosKerrNewmanExt} gave evidence for full mode stability up to $99.999\%$ of extremality for $|m|\leq l=1,2,3,4$ modes, in addition to information on quasinormal modes (QNMs). Evidence for nonlinear stability is given in \cite{ZilhaoCardosoHerdeiroLehnerSperhakeKNStab}. Semi-analytic works include \cite{PaniBertiGualtieriSlowKN,PaniBertiGualtieriSlowKN2,MarkYangZimmermanChenQNMSlowKN}. The linear stability of KN for $|\frac{Q}{M}|,|\frac{a}{M}|\ll 1$ was proved in \cite{HeLinearKerrNewman}; see also \cite{GiorgiKNTeukolsky}, further \cite{GiorgiRNLinearSmall,GiorgiRNLinear,MoncriefRNOdd,MoncriefRNEven,MoncriefRNGaugeInv} for the case $a=0$, and \cite{GiorgiWanKNAxial,MoncriefGudapatiKNAxi} regarding KN in axisymmetry. While QNMs are not discussed here, our decoupled system may be helpful for their further study: unlike, e.g., the Dudley--Finley approximation \cite{DudleyFinleyApprox,KokkotasKNQNMs,BertiKokkotasKNQNMs,BertiCardosoStarinetsQNM,ZimmermanMarkZeroDampedQNM,SahaSilvaKNDudleyFinely}, it is \emph{exact} for all subextremal KN parameters with $Q\neq 0$ and for all complex frequencies $\omega\in\C$.

\prlheading{Definitions and equations}%
We consider a subextremal KN black hole, so $a^2+Q^2<M^2$. Set
\[
  \Delta=r^2{-}2 M r{+}a^2{+}Q^2,\,
  \bar\rho=r{+}i a\cos\theta,\,
  \bar\rho^*=r{-}i a\cos\theta.
\]
Thus, $\Delta$ has roots $0\leq r_-<r_+=M{+}\sqrt{M^2{-}a^2{-}Q^2}$. We work in the Kinnersley tetrad
\begin{align*}
  \bell^\mu\pa_\mu &= \tfrac{1}{\Delta}\bigl((r^2+a^2)\pa_t+\Delta\pa_r+a\pa_\phi\bigr), \\
  \bn^\mu\pa_\mu &= \tfrac{1}{2\bar\rho\bar\rho^*}\bigl((r^2+a^2)\pa_t-\Delta\pa_r+a\pa_\phi\bigr), \\
  \bm^\mu\pa_\mu &= \tfrac{\sin\theta}{\bar\rho\sqrt 2}\bigl(i a\pa_t+\tfrac{1}{\sin\theta}\pa_\theta+\tfrac{i}{\sin^2\theta}\pa_\phi\bigr).
\end{align*}
The metric signature convention $(+,-,-,-)$ corresponds to $\bell\cdot\bn=1$, $\bm\cdot\bar\bm=-1$. We write $e_1=\bell$, $e_2=\bn$, $e_3=\bm$, $e_4=\bar\bm$. Our convention for the curvature tensor is $R(e_\mu,e_\nu)e_\rho=([\nabla_{e_\mu},\nabla_{e_\nu}]-\nabla_{[e_\mu,e_\nu]})e_\rho=R^\sigma{}_{\rho\mu\nu}e_\sigma$; the Weyl tensor $C_{\sigma\rho\mu\nu}$ is its trace-free part. The middle Weyl component on KN is $\Psi_2=-C_{1 3 4 2}=\frac{Q^2-M\bar\rho}{\bar\rho\bar\rho^{*3}}$. The NP spin coefficients are $\gamma_{i j k}=(e_i)^\nu{}_{;\mu}(e_j)_\nu(e_k)^\mu$.

The relevant Weyl scalars and NP spin coefficients for us are $\Psi_0=-C_{1 3 1 3}$, $\Psi_1=-C_{1 2 1 3}$, $\Psi_3=-C_{1 2 4 2}$, $\Psi_4=-C_{2 4 2 4}$ and $\kappa=-\gamma_{3 1 1}$, $\sigma=-\gamma_{3 1 3}$, $\lambda=\gamma_{4 2 4}$, $\nu=\gamma_{4 2 2}$; they vanish on KN. For the Maxwell scalars, we use the convention $\Phi_0=F_{1 3}$, $\Phi_1=\frac12(F_{1 2}+F_{4 3})$, $\Phi_2=F_{4 2}$; on KN, $\Phi_0=\Phi_2=0$ and $\Phi_1=\frac{Q}{2\bar\rho^{*2}}$.

We consider perturbations of a subextremal KN solution $g,F=\dd A$. Linearized quantities carry the superscript ``$(1)$''. We treat $Q\neq 0$ (the Kerr case being known). We recall the phantom gauge \cite{ChandrasekharBlackHoles}: under null rotations around $\bell$, i.e., $\bell'=\bell$, $\bm'=\bm+\bar b\bell$, $\bn'=\bn+b\bm+\bar b\bar\bm+|b|^2\bell$, where $b$ is a complex-valued function that vanishes on the KN background, we have $(\Phi_2^{(1)})'=\Phi_2^{(1)}+2 b^{(1)}\Phi_1$. We can thus set this to $0$ for an appropriate choice of $b^{(1)}$ for $Q\neq 0$. Similarly, null rotations around $\bn$ allow us to set $(\Phi_0^{(1)})'=0$. (The quantity $\Phi_1^{(1)}$ is unaffected, and typically nonzero, but does not enter in the radiative subsystem studied in the paper.) We drop the prime and thus continue with $\Phi_0^{(1)}=\Phi_2^{(1)}=0$.

Radiative degrees of freedom are now described by the linearized Weyl scalars $\Psi_j^{(1)}$, $j=0,1,3,4$; following \cite{ChandrasekharBlackHoles,MarkYangZimmermanChenQNMSlowKN,DiasGodazgarSantosKerrNewmanExt}, we rescale them as \footnote{See \cite[\S{2.2}]{DiasGodazgarSantosKerrNewmanExt} for a gauge-invariant interpretation. Division by $Q$ appears also there \cite[(2.22)]{DiasGodazgarSantosKerrNewmanExt}.}
\[
  \psi_{-2}{=}\bar\rho^{*4}\Psi_4^{(1)},\,
  \psi_{-1}{=}\tfrac{\bar\rho^{*3}\Psi_3^{(1)}}{\sqrt 2},\,
  \psi_1{=}\sqrt{2}\bar\rho^*\Psi_1^{(1)},\,
  \psi_2{=}\Psi_0^{(1)}.
\]
We study mode solutions. By axisymmetry, it suffices to study perturbations of the form $e^{-i\omega t+i m\phi}u(r,\theta)$, defined for $r>r_+$, $\theta\in(0,\pi)$, with $\omega\in\R$, $m\in\Z$. Set
\begin{alignat*}{2}
  K &:= -(r^2+a^2)\omega+a m, &\ \hat Q &:= -a\omega\sin\theta + \tfrac{m}{\sin\theta}, \\
  \cD_j &:= \pa_r + \tfrac{i K}{\Delta} + j\tfrac{\Delta'}{\Delta}, &\ \cL_j&:=\pa_\theta+\hat Q+j\cot\theta, \\
  \cD_j^\dag &:= \pa_r - \tfrac{i K}{\Delta} + j\tfrac{\Delta'}{\Delta}, &\ \cL_j^\dag&:=\pa_\theta-\hat Q+j\cot\theta.
\end{alignat*}
In terms of the rescaled linearized NP spin coefficients
\[
  k=\tfrac{\kappa^{(1)}}{\sqrt{2}\bar\rho^{*2}},\quad
  s=\tfrac{\bar\rho\sigma^{(1)}}{\bar\rho^{*2}},\quad
  \ell=\tfrac{\bar\rho^*\lambda^{(1)}}{2},\quad
  n=\tfrac{\bar\rho\bar\rho^*\nu^{(1)}}{\sqrt{2}},
\]
the linearized NP equations can be expressed as two sets of four coupled first-order equations: setting $F_\pm=3(M-\frac{Q^2}{\bar\rho})\pm\frac{Q^2\bar\rho^*}{\bar\rho^2}$, we call (see \cite[Ch.~11, eq.~(139)--(142)]{ChandrasekharBlackHoles})
\begin{subequations}
\begin{align}
\label{EqEPlus1}
  \Bigl(\cL_2{-}\frac{3 i a\sin\theta}{\bar\rho^*}\Bigr)\psi_2 {-} \Bigl(\cD_0{+}\frac{3}{\bar\rho^*}\Bigr)\psi_1 &= {-}2 k F_+, \\
\label{EqEPlus2}
  \Delta\Bigl(\cD_2^\dag{-}\frac{3}{\bar\rho^*}\Bigr)\psi_2 {+} \Bigl(\cL_{{-}1}^\dag{+}\frac{3 i a\sin\theta}{\bar\rho^*}\Bigr)\psi_1 &= 2 s F_-, \\
\label{EqEPlus3}
  \Bigl(\cD_0{+}\frac{3}{\bar\rho^*}\Bigr)s {-} \Bigl(\cL_{{-}1}^\dag{+}\frac{3 i a\sin\theta}{\bar\rho^*}\Bigr)k &= \frac{\bar\rho}{\bar\rho^{*2}}\psi_2, \\
\label{EqEPlus4}
  \Delta\Bigl(\cD_2^\dag{-}\frac{3}{\bar\rho^*}\Bigr)k {+} \Bigl(\cL_2{-}\frac{3 i a\sin\theta}{\bar\rho^*}\Bigr)s &= \frac{2\bar\rho}{\bar\rho^{*2}}\psi_1,
\end{align}
\end{subequations}
the \emph{positive spin system}, and the second, obtained using the replacements $\psi_1\to-\psi_{-1}$, $\psi_2\to\psi_{-2}$, $k\to -n$, $s\to\ell$, $\cL_{-1}^\dag\to\cL_{-1}$, $\cL_2\to\cL_2^\dag$, $\cD_0+\frac{3}{\bar\rho^*}\to\Delta(\cD_{-1}^\dag+\frac{3}{\bar\rho^*})$, $\Delta(\cD_2^\dag-\frac{3}{\bar\rho^*})\to\cD_0-\frac{3}{\bar\rho^*}$, the \emph{negative spin system} \footnote{The positive (and likewise the negative) spin system can, to some extent, be reduced to a closed system of two coupled second-order scalar equations for $(\psi_2,\psi_1)$; a caveat is that this reduction involves division by $F_-$ in~\eqref{EqEPlus2}, which can vanish in the KN exterior for certain KN parameters (e.g., $a=0$ and $\frac{Q^2}{M^2}>\frac{15}{16}$). Giorgi \cite[Theorem~6.1]{GiorgiKNEquations} avoids this issue by not performing such a complete reduction and instead retaining an enlarged system (that, moreover, remains regular as $Q\to 0$). By contrast, we avoid this issue by working directly with regular first-order systems, albeit only for $Q\neq 0$. Conversely, the second-order system implies the first-order system upon \emph{defining} $k,s$ by~\eqref{EqEPlus1}--\eqref{EqEPlus2}.}.

\prlheading{Main result}%
{\bf Theorem.} On a subextremal KN black hole with $Q\neq 0$, all solutions of the positive spin system~\eqref{EqEPlus1}--\eqref{EqEPlus4} (and likewise of the negative spin system) with $\omega\in\R$, $m\in\Z$ that are \emph{outgoing} (and, in the case $\omega=0$, obeying the bounds below) must vanish.\hfill$\blacksquare$

The outgoing condition is as follows: setting $H(r):=\exp(i\int^r_{2 r_+} K/\Delta\,\dd r)$, the normalized quantities
\begin{equation}
\label{EqMainHor}
  H^{-1}\Delta^b\psi_b,\,
  H^{-1}\Delta^2 k,\,
  H^{-1}\Delta s,\,
  H^{-1}\Delta^{-2}n,\,
  H^{-1}\Delta^{-1}\ell
\end{equation}
are smooth on $[r_+,\infty)$. (The origin is that $H(r)$ captures the purely ingoing asymptotics at the event horizon, and $\psi_b$ has boost-weight $b$, i.e., scales with $\Delta^b$ upon scaling $\bell,\bn\rightsquigarrow\Delta\bell,\Delta^{-1}\bn$; similarly for $k,s,n,\ell$ with boost-weights $b=2,1,-2,-1$.)

At infinity and for $\omega\neq 0$, we demand that $H\psi_b,H k,H s,H n,H\ell=\cO(r^C)$ for some fixed $C$; by $f=f(r,\theta)=\cO(r^C)$ we mean that $|(r\pa_r)^j\pa_\theta^k f|\lesssim_{j,k}r^C$ for all $j,k$ \footnote{Using~\eqref{EqEPlus1}--\eqref{EqEPlus4} and the negative spin system, the stated outgoing condition for $\omega\neq 0$ turns out to force precise asymptotics for $\psi_b,k,s,n,\ell$ by a simple asymptotic analysis of the system~\eqref{EqEPlus1}--\eqref{EqEPlus4}: the quantities with boost-weight $b$ are of class $e^{i\omega(r+2 M\log r)}r^{-1-2 b}\times\CI(1/r,\theta)$. We do not need this more precise information for our arguments, however.}. For $\omega=0$, we require $\psi_\fs=\cO(r^{-1-\fs})$; the two spin systems then give $\psi_2,k=\cO(r^{-3})$, $\psi_1,s=\cO(r^{-2})$, $\psi_{-1},\ell=\cO(1)$, $\psi_{-2},n=\cO(r)$.

Near the pole $\theta=0$ (resp.\ $\theta=\pi$) of $\Sph^2$, we demand that for a quantity $f$ with spin-weight $\fs$, $e^{i\fs\phi}f$ (resp.\ $e^{-i\fs\phi}f$) extends smoothly. (This arises from the observation that $e^{i\phi}\bm$, resp.\ $e^{-i\phi}\bm$, extends smoothly to the pole.) The spin weights of $\psi_\fs,k,s,n,\ell$ are $\fs,1,2,-1,-2$. For smooth modes $f\sim e^{i m\phi}$, this implies $|f|\lesssim\theta^{|m+\fs|}$ (resp.\ $|f|\lesssim(\pi-\theta)^{|m-\fs|}$).

\prlheading{Proof of the Theorem}%
We refer to the positive (negative) spin system using $\varsigma=+1$ ($-1$). We treat $\omega\neq 0$ in Steps~1--4 below, and $\omega=0$ in Step~5.

\paragraph{Step~1: transformation into chiral form.}

The principal part of~$\Delta^{-\frac12}\cdot$\eqref{EqEPlus2}, $(-1)\cdot$\eqref{EqEPlus1}, resp.\ $\Delta^{-\frac12}\cdot$\eqref{EqEPlus4}, \eqref{EqEPlus3} is $\Bigl(\begin{smallmatrix} \Delta^{\frac12}\pa_r & \pa_\theta \\ -\pa_\theta & \Delta^{\frac12}\pa_r \end{smallmatrix}\Bigr)$ acting on $(\psi_2,\Delta^{-\frac12}\psi_1)$, resp.\ $(k,\Delta^{-\frac12}s)$. Diagonalizing gives the 2D Dirac-type operators $\Delta^{\frac12}\pa_r\mp i\pa_\theta$ acting on $\psi_2\pm i\Delta^{-\frac12}\psi_1$ and $k\pm i\Delta^{-\frac12}s$. Arguing similarly for the negative spin system, let us introduce
\begin{equation}
\label{EqPfypm}
\begin{split}
  y_\pm^{(+)} &:= \Delta^{\frac54} \sqrt{\sin\theta} \begin{pmatrix} \psi_2 \pm i\Delta^{-\frac12}\psi_1 \\ -2 Q(k\pm i\Delta^{-\frac12}s) \end{pmatrix}, \\
  y_\pm^{(-)} &:= \Delta^{-\frac34} \sqrt{\sin\theta} \begin{pmatrix} \psi_{-2} \mp i\Delta^{\frac12}\psi_{-1} \\ 2 Q(n \mp i\Delta^{\frac12}\ell) \end{pmatrix}.
\end{split}
\end{equation}
Then $y_\pm^{(\varsigma)}$ solves the system
\begin{gather}
\label{EqPf1Chiral}
  (\Delta^{\frac12}\pa_r\mp i\pa_\theta\mp A)y_\pm^{(\varsigma)} = (c^{(\varsigma)} I \mp B^{(\varsigma)})y_\mp^{(\varsigma)}, \\
  A = \begin{pmatrix} 0 & -i\frac{Q\bar\rho^*}{\bar\rho^2} \\ i\frac{Q\bar\rho}{\bar\rho^{*2}} & 0 \end{pmatrix}, \ 
  B^{(\varsigma)} = i\begin{pmatrix} b^{(\varsigma)} & \frac{3(M\bar\rho-Q^2)}{Q\bar\rho} \\ \frac{3 Q\bar\rho}{\bar\rho^{*2}} & -b^{(\varsigma)} \end{pmatrix}, \nonumber\\
  b^{(\varsigma)} {=} -\varsigma\hat Q {-} \tfrac32\cot\theta {+} \tfrac{3 i a\sin\theta}{\bar\rho^*},\,
  c^{(\varsigma)} {=} \Bigl(\varsigma\tfrac{i K}{\Delta}{+}\tfrac{3}{\bar\rho^*}{-}\tfrac{3\Delta'}{4\Delta}\Bigr)\Delta^{\frac12}. \nonumber
\end{gather}
The radial and angular factors $\Delta^{\frac14+\varsigma}$ and $\sqrt{\sin\theta}$ in~\eqref{EqPfypm} are chosen so that $\tr A=0$. The factor $2 Q$ ensures $A=A^*$.

\paragraph{Step~2: a matrix integrating factor.}

The $2\times 2$-matrix
\begin{equation}
\label{EqPf2X}
  X(r,\theta) =
    \begin{pmatrix}
      Q\frac{\Delta^{\frac12}+a\sin\theta}{\bar\rho} & -i\bigl(M-\frac{Q^2}{\bar\rho}\bigr) \\
      i\bigl(M-\frac{Q^2}{\bar\rho^*}\bigr) & Q\frac{\Delta^{\frac12}-a\sin\theta}{\bar\rho^*}
    \end{pmatrix},
\end{equation}
which is invertible since $\det X = Q^2-M^2 < 0$, satisfies
\begin{equation}
\label{EqPf2Int}
  (\Delta^{\frac12}\pa_r{-}i\pa_\theta)X{=}A X,\,
  (\Delta^{\frac12}\pa_r{+}i\pa_\theta)(X^*)^{-1}{=}{-}A(X^*)^{-1}.
\end{equation}
The second equation follows from the first and $A^*=A$.

The author discovered the matrix integrating factor $X$ in eq.~\eqref{EqPf2X} using the following observation. Write $\sD=\Delta^{\frac12}\pa_r-i\pa_\theta$. For the functions $f_1=\bar\rho$, $f_2=\Delta^{\frac12}-a\sin\theta$, $f_3=\bar\rho^*$, $f_4=\Delta^{\frac12}+a\sin\theta$, we then have
\[
  \sD f_1=f_2,\ \sD f_2=f_1-M,\quad \sD f_3=f_4,\ \sD f_4=f_3-M.
\]
Writing $(x,y)^T$ for either of the two columns of $X$, eq.~\eqref{EqPf2Int} reads $\sD x=-i Q\frac{f_3}{f_1^2}y$, $\sD y=i Q\frac{f_1}{f_3^2}x$. Applying $\sD$ to the first equation and using the second equation gives
\[
  \sD^2 x = \tfrac{Q^2}{f_1 f_3}x + \bigl(\tfrac{f_4}{f_3}-2\tfrac{f_2}{f_1}\bigr)\sD x.
\]
With the ansatz $x=f_1^{-1}(\sum_{j=1}^4\alpha_j f_j+\beta)$ for constants $\alpha_j,\beta\in\C$, one finds two linearly independent solutions $(\alpha_1,\alpha_2,\alpha_3,\alpha_4,\beta)=(0,0,0,1,0)$, $(M,0,0,0,-Q^2)$. These give the two columns of eq.~\eqref{EqPf2X}, up to convenient scalar factors.

Let now
\begin{equation}
\label{Eqzpm}
\begin{gathered}
  z_+^{(\varsigma)} := (\det X)X^{-1}y_+^{(\varsigma)},\quad
  z_-^{(\varsigma)} := X^*y_-^{(\varsigma)}, \\
  C^{(\varsigma)}(r,\theta) := (\det X)X^{-1}(c^{(\varsigma)}I-B^{(\varsigma)})(X^*)^{-1}.
\end{gathered}
\end{equation}
Then $(\Delta^{\frac12}\pa_r-i\pa_\theta)z_+^{(\varsigma)} = C^{(\varsigma)}z_-^{(\varsigma)}$, and from $\tr B^{(\varsigma)}=0$ we obtain $\adj(c^{(\varsigma)}I-B^{(\varsigma)})=c^{(\varsigma)}I+B^{(\varsigma)}$ (where $\adj\left(\begin{smallmatrix}a&b\\c&d\end{smallmatrix}\right)=\left(\begin{smallmatrix}d&-b\\-c&a\end{smallmatrix}\right)$ is the adjugate matrix), so $(\Delta^{\frac12}\pa_r+i\pa_\theta)z_-^{(\varsigma)} = \adj(C^{(\varsigma)})z_+^{(\varsigma)}$. Crucially, $C^{(\varsigma)}$ separates:
\begin{align*}
  &C^{(\varsigma)}(r,\theta) = R^{(\varsigma)}(r) + i\Theta^{(\varsigma)}(\theta); \\
  &\quad R^{(\varsigma)} = \begin{pmatrix} R_0^{(\varsigma)} & R_1^{(\varsigma)} \\ -R_1^{(\varsigma)} & R_0^{(\varsigma)} \end{pmatrix},\ 
    \Theta^{(\varsigma)} = \begin{pmatrix} \Theta_0^{(\varsigma)} & \Theta_1^{(\varsigma)} \\ \Theta_1^{(\varsigma)} & -\Theta_0^{(\varsigma)} \end{pmatrix}, \\
  &\quad R_0^{(\varsigma)} = \tfrac{3\Delta'}{4\Delta^{\frac12}} - i\varsigma\Delta^{\frac12}\bigl(\tfrac{K}{\Delta} - \tfrac{2\omega Q^2}{M^2-Q^2}\bigr), \\
  &\quad R_1^{(\varsigma)} = -\varsigma\tfrac{2\omega Q(M r-Q^2)}{M^2-Q^2} + \tfrac{3 i M}{2 Q}, \\
  &\quad \Theta_0^{(\varsigma)} = \tfrac32\cot\theta + \varsigma\bigl(\tfrac{m}{\sin\theta}-\tfrac{a\omega(M^2+Q^2)\sin\theta}{M^2-Q^2}\bigr), \\
  &\quad \Theta_1^{(\varsigma)} = -\varsigma\tfrac{2 a\omega M Q\cos\theta}{M^2-Q^2}-\tfrac{3 M}{2 Q}.
\end{align*}
Since $\adj C^{(\varsigma)}=(R^{(\varsigma)})^T-i\Theta^{(\varsigma)}$, $\vec z^{(\varsigma)}=\bigl(\begin{smallmatrix} z_+^{(\varsigma)}\\z_-^{(\varsigma)}\end{smallmatrix}\bigr)$ solves
\begin{equation}
\begin{split}
\label{EqPf2Final}
  &\bigl(\Delta^{\frac12}\pa_r - (R_0^{(\varsigma)} \cI + R_1^{(\varsigma)} \cJ) - \cA^{(\varsigma)}\bigr)\vecz^{(\varsigma)} = 0, \\
  &\quad
  \boxed{\cA^{(\varsigma)} := \begin{pmatrix} i\pa_\theta & i\Theta^{(\varsigma)} \\ -i\Theta^{(\varsigma)} & -i\pa_\theta \end{pmatrix},}
\end{split}
\end{equation}
where $\cI = \bigl(\begin{smallmatrix} 0 & I_{2\times 2} \\ I_{2\times 2} & 0 \end{smallmatrix}\bigr)$, $\cJ = \bigl(\begin{smallmatrix} 0 & -J \\ J & 0 \end{smallmatrix}\bigr)$, $J=\bigl(\begin{smallmatrix} 0 & -1 \\ 1 & 0 \end{smallmatrix}\bigr)$.

Before proceeding, we track the outgoing condition at $r=r_+$ to $z_\pm^{(\varsigma)}$. First, eq.~\eqref{EqPfypm} implies $y_+^{(+)}+y_-^{(+)}\in H\Delta^{-\frac34}\CI$, $y_+^{(+)}-y_-^{(+)}\in H\Delta^{-\frac14}\CI$ and $y_+^{(-)}+y_-^{(-)}\in H\Delta^{\frac54}\CI$, $y_+^{(-)}-y_-^{(-)}\in H\Delta^{\frac34}\CI$. For $X$ in~\eqref{EqPf2X}, we have
\begin{align*}
  &\adj(X)=\Delta^{\frac12}X_1+X_2,\ \ X^*=\Delta^{\frac12}X_1-X_2, \\
  &X_1=\begin{pmatrix} \frac{Q}{\bar\rho^*} & 0 \\ 0 & \frac{Q}{\bar\rho} \end{pmatrix},\ \ 
  X_2=\begin{pmatrix} -\frac{Q a\sin\theta}{\bar\rho^*} & i\bigl(M-\frac{Q^2}{\bar\rho}\bigr) \\ -i\bigl(M-\tfrac{Q^2}{\bar\rho^*}\bigr) & \frac{Q a \sin\theta}{\bar\rho} \end{pmatrix},
\end{align*}
and therefore eq.~\eqref{Eqzpm} yields
\begin{align*}
  z_+^{(\varsigma)}+z_-^{(\varsigma)} &= \Delta^{\frac12}X_1(y_+^{(\varsigma)}+y_-^{(\varsigma)}) + X_2(y_+^{(\varsigma)}-y_-^{(\varsigma)}), \\
  z_+^{(\varsigma)}-z_-^{(\varsigma)} &= X_2(y_+^{(\varsigma)}+y_-^{(\varsigma)}) + \Delta^{\frac12}X_1(y_+^{(\varsigma)}-y_-^{(\varsigma)}).
\end{align*}
This gives
\begin{equation}
\label{EqPf2Reg}
\begin{alignedat}{2}
  z_+^{(+)}{+}z_-^{(+)} &\in H\Delta^{-\frac14}\CI, &\ z_+^{(+)}{-}z_-^{(+)} &\in H\Delta^{-\frac34}\CI, \\
  z_+^{(-)}{+}z_-^{(-)} &\in H\Delta^{\frac34}\CI, &\ z_+^{(-)}{-}z_-^{(-)} &\in H\Delta^{\frac54}\CI.
\end{alignedat}
\end{equation}

\paragraph{Step~3: angular eigenvalue problem and separation.}

Since $\omega\in\R$, the operator $\cA^{(\varsigma)}$ is symmetric. In fact:

\begin{lemma}
\label{LemmaPf3SA}
  $\cA^{(\varsigma)}$ is essentially self-adjoint on $\CIc((0,\pi);\C^4)$, and $L^2((0,\pi),|\dd\theta|;\C^4)$ has a complete orthonormal eigenbasis of $\cA^{(\varsigma)}$.
\end{lemma}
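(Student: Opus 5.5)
The plan is to treat $\cA^{(\varsigma)}$ as a one-dimensional Dirac-type operator on $(0,\pi)$ with matrix potential $\Theta^{(\varsigma)}$, and to use Weyl's limit point/limit circle theory at the two singular endpoints. For $\omega\in\R$, $\Theta^{(\varsigma)}$ is a real symmetric $2\times 2$ matrix function. The principal part $\bigl(\begin{smallmatrix} i\pa_\theta & 0\\ 0 & -i\pa_\theta\end{smallmatrix}\bigr)$ is formally symmetric, so $\cA^{(\varsigma)}$ is symmetric on $\CIc$. It is smooth on $(0,\pi)$, and the only singularities are $\Theta_0^{(\varsigma)}\sim \kappa_0/\theta$ as $\theta\to 0$ and $\Theta_0^{(\varsigma)}\sim -\kappa_\pi/(\pi-\theta)$ as $\theta\to\pi$, with $\kappa_0=\tfrac32+\varsigma m$ and $\kappa_\pi=\tfrac32-\varsigma m$. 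Both are half-integers, hence $|\kappa_0|,|\kappa_\pi|\geq\tfrac12$. The term $\Theta_1^{(\varsigma)}$ is bounded.

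\emph{Step (i): essential self-adjointness.} By the standard deficiency-index criterion for first-order ODE systems, it suffices to show that for $\lambda=\pm i$ the solution space of $(\cA^{(\varsigma)}-\lambda)u=0$ contains, near each endpoint, no nonzero solution that is $L^2$ near that endpoint and also $L^2$ near the other endpoint. This is the limit point case at both ends. Near $\theta=0$ I will write the system in Frobenius form $\theta\pa_\theta u=(N+\cO(\theta))u$. The constant matrix $N$ is built from $\kappa_0$ acting in the coupled pairs $(z_+)_j,(z_-)_j$, so its eigenvalues are $\pm\kappa_0$ (each with multiplicity two). The solutions therefore behave like $\theta^{|\kappa_0|}$ and $\theta^{-|\kappa_0|}$, with logarithmic corrections only if the indicial roots differ by integers, which does not affect the conclusion below. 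Since $|\kappa_0|\geq\tfrac12$, $\theta^{-|\kappa_0|}\notin L^2$ near $0$; the borderline case $|\kappa_0|=\tfrac12$ gives a log-divergent integral, which still fails to be $L^2$. Only the two-dimensional subspace of decaying solutions is square integrable near $0$, and likewise near $\pi$. The standard argument then gives deficiency indices $(0,0)$: the $\theta\to 0$ decaying and $\theta\to\pi$ decaying subspaces intersect trivially for nonreal $\lambda$, because any nonzero intersection would be an eigenvector of the symmetric operator with nonreal eigenvalue, obtained by integrating by parts with vanishing boundary terms.

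\emph{Step (ii): compact resolvent.} A complete orthonormal eigenbasis follows once the closure has compact resolvent. I will prove this with a quadratic-form estimate on $\CIc$. Write $\cA^{(\varsigma)}=i\sigma\pa_\theta+V$, with $\sigma=\operatorname{diag}(I,-I)$ and $V$ off-diagonal and built from $\Theta^{(\varsigma)}$, so that $\sigma V=-V\sigma$. Expanding $\|\cA^{(\varsigma)} u\|^2$ and integrating by parts then gives
\[
\|\cA^{(\varsigma)} u\|^2=\|u'\|^2+\|\Theta^{(\varsigma)} u\|^2+\la u,\,W u\ra ,
\]
where the cross term $W$ is proportional to $\pa_\theta\Theta^{(\varsigma)}$. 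Near $0$ this gives the leading weight $(\kappa_0^2\pm\kappa_0)/\theta^2$ on the respective components, which is $\geq -\tfrac14\theta^{-2}$ precisely when $\kappa_0$ is a half-integer. Combining this with the Hardy inequality $\|u'\|_{L^2(0,\delta)}^2\geq\tfrac14\|u/\theta\|^2_{L^2(0,\delta)}$ (with cutoffs) yields
\[
\|\cA^{(\varsigma)} u\|^2+C\|u\|^2\geq c\,\|\chi u'\|^2+c\,\|w u\|^2 ,
\]
with $\chi$ a cutoff away from the endpoints and $w\to\infty$ (at least logarithmically) at $\theta=0,\pi$. Rellich's theorem in the interior and the smallness of $\|u\|_{L^2}$ near the endpoints relative to $\|wu\|$ then give compactness of the embedding of the graph-norm domain into $L^2$. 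The spectral theorem for self-adjoint operators with compact resolvent then provides the orthonormal eigenbasis.

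The main obstacle is the borderline case $|\kappa|=\tfrac12$, i.e.\ $m=\mp1,\mp2$ depending on $\varsigma$ and the endpoint. There the Hardy inequality is exactly saturated, so the weight $w$ must be extracted from a refined Hardy inequality with a logarithmic remainder. If that becomes awkward, I would instead obtain compactness directly from the explicit Frobenius asymptotics, using the Weyl–Titchmarsh characterisation: limit point at both ends with solutions of power type.
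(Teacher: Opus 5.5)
\textbf{Verdict: correct, with Step (ii) by a different route from the paper.} Your Step (i) is the paper's argument. The paper also multiplies by $\sin\theta$ and reads off the half-integer indicial roots $\pm(\frac32\cos\theta_0+\varsigma m)$, each with a two-dimensional space of indicial solutions. It keeps only the root $\geq\frac12$ by $L^2$-membership. It then integrates by parts, with boundary terms vanishing because $|u|\lesssim\sqrt{\sin\theta}$, to get $0=\Im\int\la(\cA^*-i)u,u\ra\,\dd\theta=-\|u\|^2$.

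\textbf{How the paper proves compactness.} It writes no coercive estimate. It observes that $\sin\theta\cdot\cA$ is an elliptic b-operator. Then $u,\cA u\in L^2$, together with the indicial roots already found, give $\cD_{\rm max}\subset(\sin\theta)^\delta H^1_\bop$ for all $\delta<\frac12$. A weighted Rellich theorem gives compactness into $L^2=(\sin\theta)^{-\frac12}L^2(|\tfrac{\dd\theta}{\sin\theta}|)$ as soon as $\delta>-\frac12$. Because the margin between $\frac12$ and $-\frac12$ is wide, the borderline roots $|\kappa|=\frac12$ cost nothing, and the argument works directly on the maximal domain.

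\textbf{What your route buys, and what it costs.} Your Weitzenb\"ock-type identity is correct. The cross term is $W=-\bigl(\begin{smallmatrix}0&\Theta'\\ \Theta'&0\end{smallmatrix}\bigr)$, and $\Theta^2=(\Theta_0^2+\Theta_1^2)I$, so all errors beyond the $\theta^{-2}$ leading part are $\cO(1)$. The approach is more elementary and gives a quantitative lower bound, but it has to handle the critical Hardy constant explicitly. The log-refined inequality you propose does close this. With $u=\theta^{1/2}v$ and $t=\log(1/\theta)$ one has $\int|u'|^2\,\dd\theta-\frac14\int|u|^2\theta^{-2}\,\dd\theta=\int|\pa_t v|^2\,\dd t$. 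The ordinary Hardy inequality in $t$ then yields the weight $w=(2\theta\log(1/\theta))^{-1}\to\infty$. Two further points make the argument clean:
\begin{itemize}
\item Localization loses nothing: for $\sum_j\chi_j^2=1$ one has $\sum_j\|\cA(\chi_j u)\|^2=\|\cA u\|^2+\sum_j\|\chi_j'u\|^2$, so the critical constant survives.
\item It suffices to prove compactness for the graph-norm unit ball of $\CIc$, because by Step (i) the self-adjoint domain is its completion.
\end{itemize}

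\textbf{One correction.} The inequality $\kappa^2-|\kappa|\geq-\frac14$ holds for every real $\kappa$, not only for half-integers. So half-integrality plays no role in the Hardy step. It is needed only in Step (i), to guarantee $|\kappa|\geq\frac12$, i.e.\ the limit point case at both ends.
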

\begin{proof}
  We drop the superscript $\varsigma$. The adjoint of $\cA\colon\CIc\subset L^2\to L^2$ has domain $\cD_{\rm max}=\{u\in L^2\colon\cA u\in L^2\}$. Suppose $u\in\cD_{\rm max}$ satisfies $(\cA^*-i)u=0$. Multiplying by $\sin\theta$, one obtains an ODE that is of regular-singular type at the poles $\theta=\theta_0\in\{0,\pi\}$; the indicial roots there are $\pm(\frac32\cos\theta_0+\varsigma m)$, each with a 2-dimensional space of indicial solutions. These roots are half-integers; and since $\theta^{-\frac12},(\pi-\theta)^{-\frac12}\notin L^2$, only the indicial root $\geq\frac12$ is allowed at either pole; so $|u|\lesssim\sqrt{\sin\theta}$ and $|\pa_\theta u|\lesssim\frac{1}{\sqrt{\sin\theta}}$, and we can thus deduce
\[
  0 = \Im\int_0^\pi \la(\cA^*-i)u,u\ra_{\C^4}\,\dd\theta = -\|u\|_{L^2}^2,
\]
so $u=0$. Similarly, one shows $\ker_{\cD_{\rm max}}(\cA^*+i)=0$. By \cite[Corollary to Theorem~VIII.3]{ReedSimonI}, this proves essential self-adjointness, with self-adjoint domain $\cD_{\rm max}$.

The existence of a complete orthonormal basis of eigenfunctions is a consequence of the compactness of the resolvent, which we prove as follows. Note that $x\cA$, $x:=\sin\theta$, is elliptic as a \emph{b-differential operator} \cite{MelroseAPS}, i.e., an operator constructed from $x\pa_\theta$ with coefficients in $\CI([0,\pi])$. Thus, $u\in L^2(|\dd\theta|)=x^{-\frac12}L^2(|\frac{\dd\theta}{\sin\theta}|)$ and $x\cA u\in x^{+\frac12}L^2(|\frac{\dd\theta}{\sin\theta}|)$, and the above information on indicial roots (called \emph{boundary spectrum} in b-terminology), imply $u\in x^\delta H^1_\bop([0,\pi];|\frac{\dd\theta}{\sin\theta}|)$ (the space of functions in $x^\delta L^2$ whose $x\pa_\theta$-derivative remains in $x^\delta L^2$) for all $\delta<\frac12$; so $\cD_{\rm max}\subset x^\delta H^1_\bop$. A variant of Rellich's theorem (e.g., \cite[Theorem~3.21]{HintzScaledBddGeo}) shows that for $\delta>-\frac12$, the inclusion $x^\delta H^1_\bop\hra x^{-\frac12}L^2(|\frac{\dd\theta}{\sin\theta}|)=L^2(|\dd\theta|)$ is compact.
\end{proof}

The regularity demanded of mode solutions implies $\vecz(r,\cdot),\pa_r\vecz(r,\cdot)\in L^2((0,\pi),|\dd\theta|)$, so~\eqref{EqPf2Final} gives $\cA\vecz(r,\cdot)\in L^2$, and hence $\vecz(r,\cdot)$ lies in the self-adjoint domain of $\cA$.

To decouple~\eqref{EqPf2Final}, we need the anticommutation relations (which follow from the form of $\Theta^{(\varsigma)}$)
\[
  \cA^{(\varsigma)}\cI = -\cI\cA^{(\varsigma)},\ 
  \cA^{(\varsigma)}\cJ = -\cJ\cA^{(\varsigma)},
\]
so for $\cS:=\cI\cJ$ we have $[\cA^{(\varsigma)},\cS]=0$. Since $\cS^2=-I_{4\times 4}$, the eigenvalues of $\cS$ are $\pm i$. For a joint ($\C^2\otimes\C^2$-valued) $(\cA^{(\varsigma)},\cS)$-eigenfunction $0\neq u=u(\theta)$ and $v:=\cI u$,
\begin{equation}
\label{EqPf3AngEig}
  \boxed{\cA^{(\varsigma)} u = \lambda u,\ \cS u = i u,}\quad \cA^{(\varsigma)} v=-\lambda v,\ \cS v=-i v;
\end{equation}
$u$ and $v$ are linearly independent (also when $\lambda=0$ since the $\cS$-eigenvalues of $u,v$ are distinct). The space of $\C^2\otimes\C^2$-valued functions of the form $\vecz^{(\varsigma)}(r,\theta) {=} z_1^{(\varsigma)}(r)(u(\theta){+}v(\theta)) {-} i z_2^{(\varsigma)}(r)(u(\theta){-}v(\theta))$ \footnote{The particular splitting here has the convenient effect of putting the terms $R_0^{(\varsigma)}$ in~\eqref{EqPf3SepFinal} below on the diagonal.}, $z_1^{(\varsigma)},z_2^{(\varsigma)}\in\CI((r_+,\infty);\C)$, is thus preserved by the operator in~\eqref{EqPf2Final}. We can write $L^2((0,\pi);\C^4)$ as the orthogonal direct sum of the spaces $\mathspan\{u,\cI u\}$ where $u$ ranges over an ONB of eigenfunctions of $\cA^{(\varsigma)}$ in $\ker(\cS-i)$ (thus $\cI u$ provides an ONB of $\ker(\cS+i)$ consisting of eigenfunctions of $\cA^{(\varsigma)}$). To prove the theorem, we thus only need to show $z_1^{(\varsigma)}=z_2^{(\varsigma)}=0$. The PDE~\eqref{EqPf2Final} reads
\begin{equation}
\label{EqPf3SepFinal}
  (\pa_r\mp\Delta^{-\frac12}R_0^{(\varsigma)})z_{1/2}^{(\varsigma)} = \Delta^{-\frac12}(R_1^{(\varsigma)}\mp i\lambda)z_{2/1}^{(\varsigma)};
\end{equation}
from~\eqref{EqPf2Reg} and $z_1^{(\varsigma)}(u+v)=\frac12(I_{4\times 4}+\cI)\vecz^{(\varsigma)}$ and $z_2^{(\varsigma)}(u-v)=\frac{i}{2}(I_{4\times 4}-\cI)\vecz^{(\varsigma)}$, the horizon regularity is
\begin{equation}
\label{EqPf3Reg}
\begin{alignedat}{2}
  z_1^{(+)} &\in H\Delta^{-\frac14}\CI,&\ z_2^{(+)} &\in H\Delta^{-\frac34}\CI, \\
  z_1^{(-)} &\in H\Delta^{\frac34}\CI,&\ z_2^{(-)} &\in H\Delta^{\frac54}\CI.
\end{alignedat}
\end{equation}

Define
\begin{equation}
\label{EqPf3xi0}
  \wt K := K - \xi_0\Delta,\quad \xi_0 := \tfrac{2\omega Q^2}{M^2-Q^2},
\end{equation}
and note that $\Delta^{-\frac12}R_0^{(\varsigma)}=\frac{3\Delta'}{4\Delta}-i\varsigma\frac{\wt K}{\Delta}$ is the logarithmic derivative of $\Delta^{\frac34}H^{-\varsigma}e^{i\varsigma\xi_0 r}$; \eqref{EqPf3Reg} thus suggests defining
\begin{align}
  &w^{(+)}{:=}\Delta^{\frac34}H^{-1}e^{i\xi_0 r}z_2^{(+)},\,
  w^{(-)}{:=}\Delta^{-\frac34}H^{-1}e^{i\xi_0 r}z_1^{(-)} {\in} \CI; \nonumber\\
\label{EqPf3ODEVars}
  &\quad \pa_r w^{(\varsigma)}=(R_1^{(\varsigma)}+i\varsigma\lambda)w^{(\varsigma)}_\sharp,
\end{align}
$w^{(+)}_\sharp:=\Delta^{\frac14}H^{-1}e^{i\xi_0 r}z_1^{(+)}$, $w^{(-)}_\sharp:=\Delta^{-\frac54}H^{-1}e^{i\xi_0 r}z_2^{(-)}$. Taking another $r$-derivative and using~\eqref{EqPf3SepFinal} gives
\begin{equation}
\label{EqPf3wsharp}
  \Delta\pa_r w^{(\varsigma)}_\sharp+\bigl(\tfrac{1-3\varsigma}{2}\Delta'+2 i\wt K\bigr)w^{(\varsigma)}_\sharp=(R_1^{(\varsigma)}-i\varsigma\lambda)w^{(\varsigma)}.
\end{equation}
We arrive at two decoupled second-order radial ODEs:
\begin{widetext}
\begin{equation}
\label{EqPf3ODE}
  \boxed{\cL^{(\varsigma)}w^{(\varsigma)} = 0,\quad
  \cL^{(\varsigma)}:=\Bigl(\pa_r - \Bigl[\frac{(1+3\varsigma)\Delta'}{2\Delta} - \frac{2 i\wt K}{\Delta} + \frac{\pa_r R_1^{(\varsigma)}}{R_1^{(\varsigma)}+i\varsigma\lambda}\Bigr]\Bigr) \Delta\pa_r - \bigl((R_1^{(\varsigma)})^2+\lambda^2\bigr),\quad \varsigma=\pm 1.}
\end{equation}
\end{widetext}
This is regular on $(r_+,\infty)$ and has $\CI([r_+,\infty))$-coefficients. (Note: $\Re(R_1^{(\varsigma)}+i\varsigma\lambda)=-\varsigma\frac{2\omega Q(M r-Q^2)}{M^2-Q^2}\neq 0$.)

\paragraph{Step~4: passage to an adjoint equation; Fourier transform.}

\emph{We drop the superscript $\varsigma=\pm 1$.} It remains to prove that outgoing solutions $w$ of $\cL w=0$ vanish. We first transform $w$ to a solution of the \emph{bilinear adjoint} $\cL^\dag$ (i.e., $\int \cL w\,v\,\dd r=\int w\,\cL^\dag v\,\dd r$). The square bracket in~\eqref{EqPf3ODE} being the logarithmic derivative of $\Delta^{\frac{1+3\varsigma}{2}}H^{-2}e^{2 i\xi_0 r}(R_1+i\varsigma\lambda)=:f^{-1}$,
\begin{align*}
  &\tilde w := f w\in(r-r_+)^\alpha\CI, \ \alpha:=-\tfrac{1+3\varsigma}{2}+\tfrac{2 i K(r_+)}{\Delta'(r_+)} \\
  &\implies \cL^\dag \tilde w = f\cL w=0.
\end{align*}
Since $K$ is real, $\alpha\in\C\setminus(-\N_0)$ unless $\varsigma=+1,K(r_+)=0$ (so $\alpha=-2$).

\paragraph{4.1. The generic case $\alpha\neq -2$.} Extending the homogeneous terms $(r-r_+)^{\alpha+j}$ for $j\in\N_0$, $\Re\alpha+j\leq 1$, as $(r-r_+)_+^{\alpha+j}$ produces the canonical distributional extension of $\tilde w$ by $0$ to $\R\ni r<r_+$. The algebra simplifies for $v=(R_1+i\varsigma\lambda)\tilde w\in\sS'(\R)$, $\supp v\subset[r_+,\infty)$, which solves $\tilde\cL^\dag v=0$, $\tilde\cL^\dag := \pa_r(R_1+i\varsigma\lambda)^{-1}\bigl(\Delta\pa_r+\bigl[\frac{1+3\varsigma}{2}\Delta'-2 i\wt K\bigr]\bigr)-(R_1-i\varsigma\lambda)$ \footnote{The source term is supported at $r=r_+$ and, unless $\varsigma=-1,K(r_+)=0$ (in which case $\alpha=1$), has non-integer homogeneity with respect to scaling of $r-r_+$, and hence vanishes. For $\alpha=1$, the source term is supported at $r=r_+$ but has non-negative homogeneity, and thus again vanishes.}. The indicial roots of $\Delta\tilde\cL^\dag$ are $0,\alpha$, with $0-\alpha\notin\N_0$; therefore, once we show that the leading-order coefficient
\begin{equation}
\label{EqPf4Coeff}
  V_\cH := \lim_{r\searrow r_+} \bigl((r-r_+)^{-\alpha}v(r)\bigr) \in \C
\end{equation}
vanishes, we conclude that $v$ vanishes to infinite order at $r=r_+$, and thus $v=0$ on $(r_+,\infty)$ by unique continuation; this then gives $w=0$ (thus also $w_\sharp=0$).

Clearing denominators yields an ODE with coefficients that are \emph{cubic} polynomials in $r$; differentiating in $r$, one finds that
\begin{align*}
  (R_1{+}i\varsigma\lambda)^{-1}\pa_r (R_1{+}i\varsigma\lambda)^2\tilde\cL^\dag &{=} \pa_r^2\bigl(\Delta\pa_r{+}\bigl[\tfrac{1{+}3\varsigma}{2}\Delta'{-}2 i\wt K\bigr]\bigr) \\
  &\hspace{-2em}{-} (R_1^2{+}\lambda^2)\pa_r {-} (3 R_1{-}i\varsigma\lambda)(\pa_r R_1)
\end{align*}
has \emph{quadratic} coefficients (and annihilates $v$). We now pass to the Fourier transform in $r$, so $\pa_r,r\rightsquigarrow i\xi,i\pa_\xi$; thus $\hat v=\hat v(\xi)$ satisfies the second-order ODE
\begin{equation}
\label{EqPf4FT}
  a(\xi)\hat v'' + (b(\xi)+i c(\xi))\hat v' + (d(\xi)+i e(\xi))\hat v = 0,
\end{equation}
where $a,b,c,d,e$ are the \emph{real-valued} functions
\begin{align*}
  a(\xi) &= \xi(\xi+\xi_0)(\xi+\xi_0+2\omega), \\
  b(\xi) &= \xi^2-\xi_0(\xi_0+2\omega) - 3\varsigma\xi(\xi+\xi_0+2\omega), \\
  c(\xi) &= 2 M\xi(\xi+\xi_0)^2, \\
  d(\xi) &= \bigl(2 a(m-a\omega)-(a^2+Q^2)(\xi+2\xi_0)\bigr)\xi^2 \\
    &\quad + \bigl(\tfrac{9 M^2}{4 Q^2}-Q^2\xi_0^2-\lambda^2\bigr)\xi + \bigl(\tfrac{3\varsigma}{2}(\xi_0+2\omega) - \tfrac{M}{Q}\xi_0\lambda\bigr), \\
  e(\xi) &= M(\xi+\xi_0)(\xi-\xi_0-3\varsigma\xi).
\end{align*}
The function $f=\xi^{-2}(\xi+\xi_0)^{-3\varsigma}$ has logarithmic derivative $\frac{b-a'}{a}$, and thus we can eliminate the real $\hat v'$-coefficient in~\eqref{EqPf4FT} via $0=(f a\hat v')'+i(f c\hat v'+f e\hat v)+f d\,\hat v$. In view of the algebraic relation $(\frac12 f c)'=f e$, we conclude that~\eqref{EqPf4FT} can be written as
\begin{equation}
\label{EqPf4FT2}
\begin{split}
  &(p\hat v')' + i(q\pa_\xi+\pa_\xi q)\hat v + \tilde d\,\hat v=0, \\
  &\quad p:=\tfrac{\xi+\xi_0+2\omega}{\xi(\xi+\xi_0)^{3\varsigma-1}},\ \ q:=M\xi^{-1}(\xi+\xi_0)^{2-3\varsigma},
\end{split}
\end{equation}
with $\tilde d=\tilde d(\xi)$ real-valued.

Since $v$ vanishes for $r<r_+$, has the leading-order singularity $(r-r_+)_+^\alpha V_\cH$ at $r=r_+$, is smooth on $(r_+,\infty)$, and of class $H^2 e^{-2 i\xi_0 r}\times H^{-1}e^{i\xi_0 r}\times H^{-1}=e^{-i\xi_0 r}$ times $\cO(r^C)$ as $r\to\infty$, $\hat v(\xi)$ is $\CI$ except at $\xi=-\xi_0$, and
\[
  \hat v(\xi) = e^{-i r_+\xi}|\xi|^{-\alpha-1}\bigl( c_\alpha^\pm V_\cH + \cO(|\xi|^{-1})\bigr),\quad \xi\to\pm\infty,
\]
where $c_\alpha^\pm:=\Gamma(\alpha+1)e^{\mp\frac{i\pi}{2}(\alpha+1)}\neq 0$. Since $p,q,\tilde d$ are real,~\eqref{EqPf4FT2} gives the constancy of the modified Wronskian
\[
  \sW(\xi) := p \Im\bigl(\hat v'\ol{\hat v}\bigr) + q|\hat v|^2
\]
on each connected component of $\R\setminus\{0,-\xi_0\}$; we compute
\begin{align*}
  \sW({-}\xi_0{-}2\omega) &{=} {-}M(-2\omega)^{2-3\varsigma}(\xi_0{+}2\omega)^{-1}|\hat v(-\xi_0{-}2\omega)|^2, \\
  \lim_{\xi\to\pm\infty}\sW(\xi) &{=} {-}(r_+{-}M)|c_\alpha^\pm|^2|V_\cH|^2.
\end{align*}
Recall that $\xi_0$ is a positive multiple of $\omega$. When $\omega<0$, we get $0\leq\sW(-\xi_0-2\omega)=\sW(+\infty)\leq 0$, so $\sW(+\infty)=0$. When $\omega>0$, then $0\leq\sW(-\xi_0-2\omega)=\sW(-\infty)\leq 0$, so $\sW(-\infty)=0$. Either way, $V_\cH=0$. As argued after~\eqref{EqPf4Coeff}, this completes the proof for $\omega\neq 0$ unless $\alpha=-2$.

\paragraph{4.2. The exceptional case $\alpha=-2$.}

Writing $v(r)=(r-r_+)^{-2}V_\cH+(r-r_+)^{-1}V^\flat_\cH+\CI([r_+,\infty))$, $\tilde\cL^\dag v=0$ implies a homogeneous linear system for $(V_\cH,V^\flat_\cH)\in\C^2$ that coincides with the system guaranteeing $\tilde\cL^\dag(V_\cH\chi_+^{-2}(r-r_+)-V^\flat_\cH\chi_+^{-1}(r-r_+))=0$ where $\chi_+^z(x)=\frac{x_+^z}{\Gamma(z+1)}$, analytically extended from $\Re z>-1$ to $z\in\C$. (This is an instance of the strong intertwining property \cite[Proposition~5.4]{PetersenVasyModeStab}.) Since $(x\pa_x+2)\delta'(x)=0$, the operator $\tilde\cL^\dag$ preserves the space $\cV:=\mathspan\{\delta(r-r_+),\delta'(r-r_+)\}$. Applying the previous arguments to $v\in\cV$ (and using that $-1$ is not an indicial root) shows that $\tilde\cL^\dag|_\cV\colon\cV\to\cV$ is injective (hence surjective). Therefore, $V_\cH=V^\flat_\cH=0$, and hence $v$ is, in fact, smooth on $[r_+,\infty)$. Its extension by $0$ to $r<r_+$ thus satisfies $\tilde\cL^\dag v\in\cV$, so we can find $v_\flat=c_1\delta'+c_0\delta\in\cV$ with $v+v_\flat\in\ker\tilde\cL^\dag$. The Wronskian argument gives $c_1=0$; the distributional equation $\tilde\cL^\dag(v+c_0\delta)=0$ then implies $c_0=0$, so $\tilde\cL^\dag v=0$; and this, with $v(r)=v(r_+)\bOne_{[r_+,\infty)}(r)+\cO((r-r_+)_+)$, implies $v(r_+)=0$. Since all indicial roots of $\Delta\tilde\cL^\dag$ are $\leq 0$, this now yields the infinite order vanishing of $v$ at $r=r_+$, and thus again $v=0$ globally.

\paragraph{Step~5. Analysis for $\omega=0$.}
\label{SsPf0}

The transformations and separation in Steps~1--3 apply; from~\eqref{EqPf3ODEVars}--\eqref{EqPf3wsharp} we get
\begin{equation}
\label{EqPf0w}
  \bigl[\pa_r\Delta\pa_r {-} \bigl(\tfrac{1+3\varsigma}{2}\Delta'{-}2 i a m\bigr)\pa_r {+} \bigl(\tfrac{9 M^2}{4 Q^2}{-}\lambda^2\bigr)\bigr] w^{(\varsigma)} = 0.
\end{equation}
The regularity at the event horizon is as in~\eqref{EqPf3ODEVars}; at infinity, we have $y_\pm^{(\varsigma)}=\cO(r^{-\frac12})$ and thus $w^{(+)}=\cO(r)$, $w^{(+)}_\sharp=\cO(1)$, $w^{(-)}=\cO(r^{-2})$, $w^{(-)}_\sharp=\cO(r^{-3})$.

\paragraph{5.1. The case $\varsigma=-1$.}
\label{SssPf0m1}

For $w:=w^{(-)}$, eq.~\eqref{EqPf0w} gives
\[
  \sW_0' = a m\Delta'|w|^2,\ 
  \sW_0:=\Delta^2\Im\bigl( w'\ol{w}\bigr) + a m\Delta|w|^2.
\]
Since $\sW_0(r)=0$ for $r=r_+,\infty$, we get $\int_{r_+}^\infty a m\Delta'|w|^2\,\dd r=0$, so $w=0$ when $a m\neq 0$. For $a m=0$, we multiply~\eqref{EqPf0w} by $\bar w$ and integrate, getting $\int_{r_+}^\infty \Delta|w'|^2{+}(\lambda^2{-}\frac{9 M^2}{4 Q^2}{+}1)|w|^2\,\dd r{+}\frac{\Delta'(r_+)}{2}|w(r_+)|^2{=}0$. To conclude $w=0$, we use:

\begin{lemma}
\label{LemmaPf0Ang}
  All eigenvalues $\lambda$ of $\cA^{(\varsigma)}|_{\omega=0}$ satisfy $|\lambda|\geq\frac{3 M}{2|Q|}$. For eigenfunctions in $\ker(\cS\mp i)$, $\lambda\neq\pm\frac{3 M}{2 Q}$.
\end{lemma}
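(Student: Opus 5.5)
We would exploit the fact that at $\omega=0$ the angular operator is a Dirac operator with one \emph{constant} mass term. For $\omega=0$ we have $\Theta_1^{(\varsigma)}\equiv-\frac{3M}{2Q}=:\Theta_1$ and $\Theta_0^{(\varsigma)}=\frac32\cot\theta+\varsigma\frac{m}{\sin\theta}$. Write $\C^4=\C^2\otimes\C^2$, with Pauli matrices $\tau_\bullet$ acting on the outer factor (the $z_\pm$ splitting) and $\sigma_\bullet$ on the inner factor. Then
\[
  \cA=i\tau_z\pa_\theta-\Theta_0\,\tau_y\sigma_z-\Theta_1\,\tau_y\sigma_x,
\]
and $\cS=\cI\cJ=-i\tau_z\sigma_y$.

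The plan is to square $\cA$. We have $\tau_y^2=1$, the products $\sigma_z\sigma_x$ anticommute, and $\Theta_1$ is constant, so all cross terms involving $\Theta_1$ cancel. A short computation should give
\[
  \cA^2=-\pa_\theta^2+\Theta_0^2-\Theta_0'\,E+\Theta_1^2,\qquad E:=\sigma_z\tau_x,
\]
where $E$ is an involution commuting with $\cA^2$. On the $E=\epsilon$ eigenspace this reads $\cA^2-\Theta_1^2=P_\epsilon^*P_\epsilon$ with $P_\epsilon=\pa_\theta+\epsilon\Theta_0$ (up to a sign convention which we would fix carefully).

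Now let $u$ be an eigenfunction with $\cA u=\lambda u$. By Lemma~\ref{LemmaPf3SA} and its proof, $u\in\cD_{\rm max}$ with $|u|\lesssim\sqrt{\sin\theta}$ and $|\pa_\theta u|\lesssim(\sin\theta)^{-1/2}$; also $\cA u=\lambda u$ is again in the domain. Under these bounds the boundary terms in the integrations by parts below vanish. Hence
\[
  (\lambda^2-\Theta_1^2)\|u\|^2=\|\cA u\|^2-\Theta_1^2\|u\|^2=\sum_{\epsilon=\pm}\|P_\epsilon u_\epsilon\|^2\geq 0,
\]
where $u_\epsilon$ is the $E=\epsilon$ component of $u$. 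This gives $|\lambda|\geq\frac{3M}{2|Q|}$. The step needing the most care is justifying these boundary terms, since $\Theta_0\sim(\frac32\pm\varsigma m)/\theta$ near the poles. We would check it using the indicial root $\geq\frac12$ at each pole: the products $u\,\pa_\theta u$ and $\Theta_0|u|^2$ are bounded there, and in fact they tend to $0$ after expanding $u$ in powers $\theta^{\mu}$ with $\mu\geq\frac12$.

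For the equality case $\lambda^2=\Theta_1^2$, we get $P_\epsilon u_\epsilon=0$ for both values of $\epsilon$, i.e.\ $\pa_\theta u=-\Theta_0Eu$. Substituting this into $\cA$ gives
\[
  \cA u=-\Theta_0\,\sigma_z(iE\tau_z\tau_x+\tau_y)\ldots,
\]
which simplifies to $\cA u=-\Theta_0(1+E)\sigma_z\tau_y u-\Theta_1\tau_y\sigma_x u$. Since $\cA u=\lambda u$ with $\lambda$ constant and $\Theta_0$ nonconstant, while the other terms are constant matrices, the $(1+E)$ term must vanish. Thus $u$ lies entirely in $\ker(E+1)$, and there $\cA u=-\Theta_1\tau_y\sigma_x u$.

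Finally, we express $\tau_y\sigma_x$ through $\cS$ and $E$. One checks that $(\tau_z\sigma_y)E=-\tau_y\sigma_x$ and $\tau_z\sigma_y=i\cS$; on $\ker(E+1)$ this gives $\tau_y\sigma_x=i\cS$. So on $\ker(\cS\mp i)$ we get $\tau_y\sigma_x=\mp1$, and therefore
\[
  \lambda=\pm\Theta_1=\mp\frac{3M}{2Q}.
\]
Thus in the equality case the eigenvalue on $\ker(\cS\mp i)$ is $\mp\frac{3M}{2Q}$, never $\pm\frac{3M}{2Q}$; this is exactly the exclusion claimed in the lemma. Whether such an equality eigenfunction actually exists, which would require $(\sin\theta)^{3/2}\tan^{\pm\varsigma m}(\theta/2)$ to be in $L^2$, is irrelevant for the statement. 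The main risk in this plan is sign bookkeeping in the identifications of $\cA$, $\cS$ and $E$. We would double-check it by testing on the explicit solutions $u=e^{\int\Theta_0}\,\xi$ with a constant $\xi\in\ker(E+1)$.
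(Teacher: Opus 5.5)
\textbf{Gap: the equality case.} Your squaring argument for $|\lambda|\geq\frac{3M}{2|Q|}$ is the paper's argument (a Dirac operator plus an anticommuting constant mass term), carried out on $\C^4$ with $E$ rather than on $\ker(\cS-i)$. The trouble is in the equality case.

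Write $\cA=i\tau_z(\pa_\theta+\Theta_0E)-\Theta_1\tau_y\sigma_x$, using $i\tau_z\Theta_0E=-\Theta_0\tau_y\sigma_z$. Substituting $\pa_\theta u=-\Theta_0Eu$ gives $i\tau_z\pa_\theta u=\Theta_0\sigma_z\tau_y u$, which cancels $-\Theta_0\tau_y\sigma_z u$ exactly. So $\cA u=-\Theta_1\tau_y\sigma_x u$, with no $(1+E)$-term. Your formula is a computational slip, and nothing algebraic forces $u\in\ker(E+1)$.

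Since $\tau_y\sigma_x=-i\cS E$ with $[E,\cS]=0$, the eigenvalue equation only forces $u$ into a single $E$-eigenspace. A solution in $\ker(E-1)\cap\ker(\cS\mp i)$ would have $\lambda=\pm\frac{3M}{2Q}$, which is exactly the value the lemma excludes.

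The missing ingredient is analytic. The $\ker(E-1)$ solutions of $(\pa_\theta+\Theta_0E)u=0$ are $(\sin\theta)^{-3/2}(\tan\frac\theta2)^{-\varsigma m}\xi$. These are never in $L^2((0,\pi))$: that would require both $\varsigma m<-1$ near $\theta=0$ and $\varsigma m>1$ near $\theta=\pi$. This is the paper's step showing $u_-\notin L^2$, hence $u_-=0$.

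So the $L^2$ question is not irrelevant. It is irrelevant for the $\ker(E+1)$ component, but it is the entire content of the second claim for the $\ker(E-1)$ component. Your proposed sanity check on $e^{\int\Theta_0}\xi$ with $\xi\in\ker(E+1)$ would not catch this, because the $\ker(E-1)$ solutions satisfy the same identity $\cA u=-\Theta_1\tau_y\sigma_x u$.

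\textbf{Smaller issue: boundary terms.} Your justification of the boundary terms is inaccurate. When $\frac12$ is an indicial root, $\bar u\,\pa_\theta u$ and $\Theta_0|u|^2$ individually have nonzero limits. For example, this happens for $\varsigma m\in\{-1,-2\}$ at $\theta=0$, where $u\sim c\,\theta^{1/2}$ with $c\neq0$. Only the combination $u^*(\pa_\theta+\Theta_0E)u=-i\,u^*\tau_z(\lambda+\Theta_1\tau_y\sigma_x)u=\cO(|u|^2)$ tends to $0$. Use that, or argue as the paper does by density of $\CIc$ in the graph norm.
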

\begin{proof}
  The action of $\cI$ flips the sign of eigenvalues of $\cS$ and $\cA^{(\varsigma)}$. It thus suffices to consider the action of $\cA^{(\varsigma)}$ on $\ker(\cS-i)$. Using the basis $(i,1,0,0)^T,(0,0,-i,1)^T$ of $\ker(\cS-i)\subset\C^4$, this action is given by $A+B$ where
  \begin{align*}
    &A:=\begin{pmatrix} i\pa_\theta & -i\Theta_0^{(\varsigma)} \\ i\Theta_0^{(\varsigma)} & -i\pa_\theta \end{pmatrix},\ \ B:=\begin{pmatrix} 0 & \Theta_1 \\ \Theta_1 & 0 \end{pmatrix}, \\
    &\qquad \Theta_0^{(\varsigma)}=\frac32\cot\theta+\frac{\varsigma m}{\sin\theta},\ \ \Theta_1:=-\frac{3 M}{2 Q}.
  \end{align*}
  Note that $A B=-B A$ and $B^2=\frac{9 M^2}{4 Q^2}I_{2\times 2}$. For $u\in\CIc((0,\pi);\C^2)$ with $\|u\|_{L^2}=1$, it follows that
  \[
    \|(A+B)u\|_{L^2}^2 = \|A u\|_{L^2}^2 + \tfrac{9 M^2}{4 Q^2}\|u\|_{L^2}^2\geq \tfrac{9 M^2}{4 Q^2}.
  \]
  This remains valid on the self-adjoint domain of $A+B$, proving the eigenvalue bound. Equality, for an eigenfunction $u=u_+(1,1)^T+u_-(1,-1)^T$, forces $A u=0$ and thus $\pa_\theta u_\pm\mp\Theta_0^{(\varsigma)}u_\pm=0$; so $u_-$ is a scalar multiple of $(\sin\theta)^{-\frac32}(\tan\frac{\theta}{2})^{-\varsigma m}\notin L^2$ and must vanish. Thus, $u=u_+(1,1)^T$. The eigenvalue equation $B u=\lambda u$ then gives $\lambda=\Theta_1$.
\end{proof}

From Step~3 onward, we used angular eigenfunctions in $\ker(\cS{-}i)$. Since $\lambda\neq\frac{3 M}{2 Q}$, eq.~\eqref{EqPf3ODEVars}, i.e., $0=i(\frac{3 M}{2 Q}{-}\lambda)w_\sharp^{(-)}$, gives $w_\sharp^{(-)}=0$.

\paragraph{5.2. The case \texorpdfstring{$\varsigma=+1$}{varsigma=+1}.}

For $w:=w^{(+)}$, we can effect a sign flip of the $\Delta'$-term in~\eqref{EqPf0w} by thrice differentiating in $r$. (This idea is related to the enhanced red-shift effect \cite[\S{7}]{DafermosRodnianskiLectureNotes}.) This gives
\[
  \bigl[ \pa_r\Delta\pa_r + (\Delta'+2 i a m)\pa_r + \bigl(\tfrac{9 M^2}{4 Q^2}-\lambda^2\bigr)\bigr] w''' = 0.
\]
Since $w'''=\cO(r^{-2})$ (like $w^{(-)}$ previously), the arguments in Step~5.1 give $w'''=0$. The requirement $w=\cO(r)$ gives $w=A r+B$, $A,B\in\C$. Substituting into~\eqref{EqPf0w}, the coefficient of $r$ equals $A$ times $-2+\frac{9 M^2}{4 Q^2}-\lambda^2<0$, forcing $A=0$. Insert this into eq.~\eqref{EqPf3wsharp} for $w_\sharp:=w_\sharp^{(+)}$; differentiating this equation thrice in $r$ gives $(\pa_r\Delta\pa_r+(\Delta'+2 i a m)\pa_r)w''_\sharp=0$, hence $w''_\sharp=0$, and the a priori bound $w_\sharp=\cO(1)$ yields $w_\sharp=C\in\C$. Eq.~\eqref{EqPf3wsharp} now reads $(-2 r+2 M+2 i a m)C=i(\frac{3 M}{2 Q}-\lambda)B$. The coefficient of $r$ gives $C=0$. By Lemma~\ref{LemmaPf0Ang}, $\lambda\neq\frac{3 M}{2 Q}$, so $B=0$, and hence $w=0$.\hfill$\blacksquare$

\prlheading{Conclusion and outlook}%
We separated the perturbation equations on KN into angular and radial ODEs, and proved that the radial ODEs do not admit purely oscillatory outgoing mode solutions \footnote{Direct proofs of mode stability for Kerr or Reissner--Nordstr{\"o}m in $\Im\omega>0$ in the literature rely on a full separation into angular and radial ODEs. For rotating BHs and $\omega\notin\R$, the angular ODE is not self-adjoint, so it is not clear that one can decompose a putative unstable mode solution into a sum of separated mode solutions (see however \cite{FinsterSmollerSpheroidalCompleteness}). Mode stability on the real axis and a continuity argument are a robust route to proving full mode stability.}. To deduce from this the mode stability of subextremal KN \emph{for the linearized Einstein--Maxwell system}, one needs to extend Wald's result \cite{WaldKerrPerturbation0}: the vanishing of $\psi_{\pm 2},\psi_{\pm 1}$ can be shown to imply that a mode solution at frequency $\omega\in\R$ is a pure gauge term plus the linearization of KN in the BH parameters. Details will be given elsewhere. (See also \cite{LeeChargedBHPert,LeeMcGlinnRNUniq,DebeverKamranMcLenaghanTypeDSol,CrossmanFackerellRotatingBHPert,FackerellKNLin,LunKNPert,WellerEtAlBHIsospec,HollandsToomaniKNMaxwell}.) The author expects that \cite{HintzKerrStab} can then be adapted to upgrade mode stability to the full nonlinear stability of subextremal KN.

\vspace{0.5em}
\prlheading{Acknowledgments}%
The author gratefully acknowledges support from the NSF under grant DMS-2554160.

\vspace{0.5em}
\prlheading{Data availability statement}%
No data were created or analyzed in this study.

\vspace{0.5em}
\prlheading{AI use}%
The author proposed several proof strategies to GPT-6 Pro, which assisted with testing these approaches through exploratory calculations, algebraic checks, literature review, and the development of draft arguments. The author substantially reworked these and wrote the final manuscript.

%


\end{document}